\theoremstyle{plain}
\newtheorem{theorem}{Theorem}[section]
\newtheorem{proposition}[theorem]{Proposition}
\newtheorem{definition}[theorem]{Definition}
\newtheorem{example}[theorem]{Example}
\numberwithin{equation}{section}
\newcommand{\N}{\mathbb{N}}
\newcommand{\Z}{\mathbb{Z}}
\begin{document}
\title[Introducing a probabilistic structure on Sequential Dynamical
Systems ] {Introducing a probabilistic structure on Sequential
Dynamical Systems, Simulation and Reduction of Probabilistic
Sequential Networks}
\author{Maria A. Avino-Diaz}
\address{Gauss Laboratory, Univiversity of Puerto Rico, Rio Piedras,
 Puerto Rico } \email{mavino@uprr.pr}

\thanks{\emph{22000 Mathematics Subject Classification} Primary: 05C20, 37B99, 68Q65, 93A30; Secondary: 18B20, 37B19,
60G99.}
\thanks{$\flat$ We will use
the acronym PBN, PSN, or SDS for plural as well as singular
instances.}
\thanks{$\flat_1$Information giving by Laubenbacher}
\keywords{ simulation, homomorphism of dynamical systems, sequential
dynamical systems, probabilistic sequential networks, categories,
Markov Chain}

\maketitle

\begin{abstract} A probabilistic structure
on sequential dynamical systems is introduced here, the new model
will be called Probabilistic Sequential Network, PSN. The morphisms
of Probabilistic Sequential Networks are defined using two algebraic
conditions. It is proved here that two homomorphic Probabilistic
Sequential Networks have the same equilibrium or steady state
probabilities if the morphism is either an epimorphism or a
monomorphism. Additionally, the proof of the set of PSN with its
morphisms form the category \textbf{ PSN}, having the category of
sequential dynamical systems \textbf{ SDS}, as a full subcategory is
given. Several examples of morphisms, subsystems and simulations are
given.
\end{abstract}

\section{Introduction}\label{intro}
  \par
  Probabilistic Boolean Networks  was introduced by I. Schmulevich,
 E. Dougherty, and W. Zhang in 2000, for studying the dynamic of a network using
  time discrete Markov chains, see \cite{S1, S2,SDZ, S3}. This model had  several applications
  in the study of cancer, see \cite{S4}. It is important for development  an
  algebraic mathematical theory of the model Probabilistic Boolean Network
  PBN, to describe special maps between two PBN,
  called homomorphism and projection, the first papers in this direction are, \cite{DS,ID},
  $\flat$.
  Instead of this model is being used in applications,
  the connection  of the graph of genes and
  the State Space is an interesting problem to study. The introduction of probabilities
  in the definition of  Sequential Dynamical System has this
  objective.

  The theory of sequential dynamical systems (SDS) was born studying
  networks where the entities involved in the problem do not necessarily arrive
  at a place  at the same time, and it is part of the theory of computer
  simulation, \cite{B1, B2}. The mathematical background for SDS was recently development  by
Laubenbacher and Pareigis, and  it solves  aspects of the theory and
applications, see \cite{LP1,LP2, LP3}.

The introduction of a probabilistic structure on Sequential
Dynamical Systems is an  interesting problem that it is introduced
in this paper. A SDS induces a finite dynamical system $(k^n,f)$,
\cite{H}, but the mean difference between a SDS and FDS
  is that a SDS has a graph with information giving by the
  local functions, and an order in the sequential behavior of these
  local functions. It is known  $\flat_1$, that a finite dynamical
  systems can be studied as a SDS, because we can construct a bigger
  system that in this case is sequential.
  Making together the sequential order and the probabilistic structure in the dynamic of the system,
   the possibility to work in applications to genetics increase, because genes act in a
  sequential manner. In particular the notion of morphism in the category of SDS
  establishes connection between the digraph of genes and the State Space, that is the dynamic of the function.
   Working in the applications,  Professor Dougherty's group  wanted to consider two things in the definition of PBN:
  a sequential behavior on genes, and the exact definition of projective maps
  between two PBN that inherits the properties of the first digraph of genes.
   For this reason, a new model that considers both questions and  tries to construct
  projections that  work well is described here. I introduce in this paper
  the sequential behavior and the probability together in PSN and
  my final objective is to construct  projective maps that let us
  reduce the number of functions in the finite dynamical systems
  inside the PBN. One of the mean problem in modeling dynamical systems
  is the computational aspect of the number of functions and the
  computation  of steady states in the State Space. In particular, the reduction of number of functions is one of
the most important problems, because by solving that we can
determine which part of the network \emph{State Space} may be
simplified.  The concept of morphism, simulation, epimorphism, and
equivalent Probabilistic Sequential Networks are developed  in this
paper, with this particular objective.

This paper is organized  as follows. In section \ref{SDS}, a
notation slightly different to the one used in \cite{LP2}  is
introduced for homomorphisms of SDS. This notation is helpful for
giving the concept of morphism of PSN. In section \ref{PSN1}, the
probabilistic structure on SDS is introduced using for each vertex
of the support graph, a set of local functions,  more than one
schedule, and finally having several update functions with
probabilities assigned to them. So, it is obtained a new concept:
probabilistic sequential network (PSN).  In Theorem \ref{MT} is
proved that  monomorphisms, and epimorphisms of  PSN have the same
\emph{equilibrium or steady state probabilities}. These strong
results justify the introduction of the dynamical model PSN as an
application to the study of sequential systems.  In section
\ref{CAT}, we prove that the PSN with its morphisms form the
category \textbf{ PSN}, having the category \textbf{ SDS} as a full
subcategory. Several examples of morphisms, subsystems and
simulations are given in Section \ref{EHOM}.
\section{Preliminaries}\label{SDS}
In this introductory section we give the  definitions and results of
Sequential Dynamical System  introduced  by Laubenbacher and
Pareigis in \cite{LP2}. Let $\Gamma$ be a graph, and let
$V_\Gamma=\{1,\ldots, n\}$ be the set of vertices of $\Gamma$. Let
$(k_i|i \in V_\Gamma)$ be a family of finite sets. The set $k_a$ are
called the set of local states at $a$, for all $a\in V_\Gamma$.
Define $ k^{n} := k_1 \times \cdots \times k_n$ with
  $|k_i|<\infty $, the set of (global) states of $\Gamma$.\\
   A Sequential Dynamical System (SDS)
\[{\mathcal F}=(\Gamma,(k_i)_{i=1}^{n},(f_i)_{i=1}^n, \alpha)
\] consists of
\begin{enumerate}
\item [1.]  A finite graph
$\Gamma=(V_\Gamma,E_\Gamma)$ with the set of vertices
$V_\Gamma=\{1,\ldots ,n\}$ , and the set of edges $E_\Gamma
\subseteq V_\Gamma \times V_\Gamma$.
 \item[2.] A family of finite  sets $(k_i|i\in V_\Gamma)$.
 \item[3.]  A family of  local functions $f_i:
k^n \rightarrow k^n $,  that is  \[f_i(x_1, \ldots , x_n)= (x_1,
\ldots ,x_{i-1}, \overline{f}, x_{i+1}, \ldots , x_n)\] where
$\overline{f}(x_1,\ldots , x_n)$ depends only of those variables
which are connected to $i$ in $\Gamma$.
\item [4.] A
permutation $\alpha =( \begin{array}{lll} \alpha_1& \ldots &
\alpha_n
\end{array}) $ in the set of vertices $V_\Gamma$, called an update
schedule ( i.e. a bijective map $\alpha :V_\Gamma\rightarrow
V_\Gamma)$.
\end{enumerate}
\par The global update function of the SDS is $f=f_{\alpha_1}\circ\ldots\circ{f_{\alpha_n}}$.
 The function $f$ defines the dynamical behavior of the SDS and determines a finite directed graph
  with vertex set $k^n$ and directed edges $(x,f(x))$, for all $x\in {k}^n$, called the State
  Space of $\mathcal{F}$, and denoted by $\mathcal S_f $.\\
The definition of homomorphism between two SDS uses the fact that
the vertices $V_\Gamma=\{1,\ldots , n\}$ of a SDS and the states
$k^n$ together with their evaluation map $k^{n}\times V_{\Gamma} \ni
(x,a) \mapsto <x,a>:=x_a\in k_i$, form a contravariant setup, so
that morphisms between such structures should be defined
contravariantly, i.e. by a pair of certain maps $\phi : \Gamma
\rightarrow \Delta ,$  and the induced function $h_\phi :
k^{m}\rightarrow k^{n}$  with the graph $\Delta$ having $m$
vertices. Here we use  a notation slightly different that the one
using in  \cite{LP2}.

Let $F=(\Gamma,(f_i:k^n\rightarrow {k}^n),\alpha)$ and
$G=(\Delta,(g_i:{k}^m\rightarrow k^m),\beta)$ be two SDS. Let
$\phi:\Delta \rightarrow \Gamma$ be a digraph morphism. Let
$(\widehat{\phi}_b:\overline{}k_{\phi(b)}\rightarrow k_b, \forall
b\in \Delta),$
 be a family of maps in the
category of \textbf{Set}. The map $h_\phi$ is an adjoint map,
because is defined as follows: consider the pairing $k^n\times
V_{\Gamma} \ni (x,a) \mapsto <x,a>:=x_a\in k_a;$ and  similarly $
k^m\times V_\Delta \ni (y,b) \mapsto <y,b>:=y_b\in k_b.$ The induced
adjoint map holds
$<h_\phi(x),b>:=\hat{\phi}_b(<x,\phi(b)>)=\hat{\phi}_b(x_{\phi(b)})$.
Then $\phi,$ and $(\widehat{\phi}_b)$ induce the adjoint map
$h_\phi:k^{n}\rightarrow k^{m}$ defined as follows:
\begin{equation}\label{defh}
           h_\phi(x_1,\ldots,x_n)=(\widehat{\phi}_{1}(x_{\phi(1)}),\ldots,\widehat{\phi}_{m}(x_{\phi(m)})).
\end{equation}

Then $h:F\rightarrow  G$ is a homomorphism of SDS if for all sets of
orders $\tau _{\beta}$ associated to $\beta$ in the connected
components of $\Delta $, the map $h_\phi$ holds the following
conditions:
\begin{equation}\label{eq1}\left( g_{\beta_l}\circ
g_{\beta_{l+1}}\circ \cdots \circ g_{\beta_s} \right)\circ
h_\phi=h_\phi\circ f_{\alpha_i}\hspace{.3in}\end{equation}
\[\begin{CD}
k^n @> f_{\alpha_i} >>k^n \\
@V h_\phi VV            @V h_\phi VV \\
k^m @> g_{\beta_l}\circ \cdots \circ g_{\beta_s}
>> k^m
\end{CD}
\]
where  $\{\beta_l,\beta_{l+1},\ldots,\beta_s\}=\phi
^{-1}(\alpha_i)$. If $\phi^{-1}(\alpha_i)=\emptyset$, then $Id_{k^m}
\circ h_\phi=h_\phi\circ f_{\alpha_i}$, and the commutative diagram
is now the following: \begin{equation}\label{eq2}
\begin{CD}
k^n @ > f_{\alpha_i} >> k^n \\
@V h_\phi VV  @V h_\phi VV \\
k^m @ > Id_{k^m} >>  k^m
\end{CD}
\end{equation}
 For examples and properties see\cite{LP2}. It that paper,
 the authors proved that the above diagrams implies the following one
\begin{equation}\label{eq3}
\begin{CD}
k^n @ > f=f_{\alpha_1}\circ \cdots \circ
f_{\alpha_n} >> k^n \\
@V h_\phi VV  @V h_\phi VV \\
k^m @ > g=g_{\beta_1}\circ \cdots \circ g_{\beta_m} >>  k^m
\end{CD}\end{equation}
\emph{Probabilistic  Boolean Networks} \cite{S1,S2,SDZ,S4}
 The model Probabilistic Boolean  Network
$\mathcal{A}=\mathcal{A}(\Gamma,F,C)$ is defined by the following:
\begin{itemize}
\item [(1)] a finite digraph $\Gamma =(V_\Gamma,
E_\Gamma)$ with  $n$ vertices.
\item [(2)] a family $F=\{F_1, F_2,\ldots,F_n\}$ of  ordered sets
$F_i=\{f_{i1},f_{i2},\ldots,f_{il(i)}\}$ of functions
$f_{ij}:\{0,1\}^n\rightarrow\{0,1\}$, for $i=1,\cdots,n$, and
$j=1,\ldots,l(i)$ called predictors,
\item [(3)]and a family $C=\{c_{ij}\}_{i,j}$, of selection probabilities. The selection probability that the
function $f_{ij}$ is used for the vertex $i$ is $c_{ij}$.
\end{itemize}
\par The dynamic of the model Probabilistic Boolean Network is given
 by the vector  functions $\mathbf
{f}_k=(f_{1k_1}, f_{2k_2},\ldots,f_{nk_n}):\{0,1\}^n\rightarrow
\{0,1\}^n$ for $1\le{k_i}\le{l(i)}$, and $f_{ik_i}\in{F_i}$,  acting
as a transition function.
 Each variable  $x_i\in \{0,1\}$ represents the state of the vertex  $i$.
 All functions are updated synchronously. At every time step, one of the  functions
is selected randomly from the set $F_i$ according to a predefined
probability distribution.  The selection probability that the
predictor $f_{ij}$ is
  used to predict gene $i$ is equal to
\[c_{ij}=P\{f_{ik_i}=f_{ij}\}=\sum_{k_i=j}{p\{{\mathbf
{f}=f}_k\}}.\] There are two digraph structures associated with a
Probabilistic Boolean Network: the low-level graph $\Gamma$, and the
high-level graph which consists of the states of the system and the
transitions between states. The state space $S$ of the network
together with the set of network functions, in conjunction with
transitions between the states and network functions, determine a
Markov chain. The random perturbation makes the Markov chain
ergodic, meaning that it has the possibility of reaching any state
from another state and that it possesses a long-run (steady-state)
distribution. As a Genetic Regulatory Network (GRN), evolves in
time, it will eventually enter a fixed state, or a set of states,
through which it will continue to cycle. In the first case the state
is called a singleton or fixed point attractor, whereas, in the
second case it is called a cyclic attractor. The attractors that the
network may enter depend on the initial state. All initial states
that eventually produce a given attractor constitute the basin of
that attractor. The attractors represent the fixed points of the
dynamical system that capture its long-term behavior. The number of
transitions needed to return to a given state in an attractor is
called the cycle length. Attractors may be used to characterize a
cell's phenotype (Kauffman, 1993) \cite{K}. The attractors of a
Probabilistic Genetic Regulatory Network  (PGRN) are the attractors
of its constituent GRN. However, because a PGRN constitutes an
ergodic Markov chain, its steady-state distribution plays a key
role. Depending on the structure of a PGRN, its attractors may
contain most of the steady-state probability mass \cite{A,MD,Z}.
\section{Probabilistic Sequential Networks}\label{PSN1}
The following definition give us the possibility to have several
update functions acting in a sequential manner with assigned
probabilities. All these, permit us to study the dynamic of these
systems using Markov chains and other probability tools.
\begin{definition}\label{PSN}
 A Probabilistic Sequential Network (PSN) \begin{displaymath}{\mathcal
D}=(\Gamma, \{F_a\}_{a=1}^{|\Gamma|=n}, (k_a)_{a=1}^{n},
(\alpha_j)_{j=1}^m, C=\{c_1,\ldots, c_s\})\end{displaymath} consists
of:
\begin{enumerate}
\item[(1)] a finite graph
  $\Gamma=(V_\Gamma,E_\Gamma)$ with $n$ vertices;

 \item [(2)]  a family of finite  sets $(k_a|a\in V_\Gamma)$.

  \item [(3)] for each vertex $a$ of  $\Gamma$  a  set of
local functions
\[F_a=\{f_{ai}:k^n\rightarrow k^n|1\leq i \leq
\ell(i)\},\] is assigned. (i. e. there exists a bijection map $\sim
: V_\Gamma\rightarrow \{F_a |1\leq a\leq n\}$) (for definition of
local function see (\ref{defSDS}.2)).

\item [(4)] a family of $m$ permutations $\alpha  =(
\begin{array}{lll} \alpha_{1}& \ldots & \alpha_{n}
\end{array}) $ in the set of vertices $V_\Gamma $.

\item [(5)]and  a set $C=\{c_1, \ldots ,c_s\}$, of
assign probabilities to $s$ update functions.
\end{enumerate}
\end{definition}
We select one function  in  each set $F_a$, that is one for each
vertices $a$ of $\Gamma$, and a permutation $\alpha $, with the
order in which the vertex $a$ is selected, so there are
$\underline{n}$ possibly different update functions
$f_i=f_{\alpha_1i_1}\circ \ldots \circ f_{\alpha_ni_n}$, where
$\underline{n}\leq n !\times\ell(1)\times \dots \times \ell (n)$.
The probabilities are assigned to the update functions, so there
exists a set $S=\{f_1,\ldots ,f_s\}$ of selected update functions
such that $c_i=p(f_i)$, $1\leq i \leq s$.
\begin{definition}
 The State Space of ${\mathcal D}$ is a weighted
digraph whose vertices are the elements of $k^{n}$ and there is an
arrow going from the vertex $u$ to the vertex  $v$  if there exists
an update function $f_i\in S$, such that $v=f_i(u)$. The probability
$p(u,v)$ of the arrow going from $u$ to $v$  is the sum of the
probabilities $c_{f_i}$ of all functions $f_i$, such that
$v=f_i(u)$, $u \overset{ p(u,v)}\longrightarrow f_i(u)=v$. We denote
the State Space by $\mathcal {S_D}$.
\end{definition} For each one update function in $S$ we have one  SDS
inside the PSN, so the State Space $\mathcal S_f$ is a subdigraph of
$\mathcal {S_D}$. When we take the whole set of update functions
generated by the data, we will say that we have the \emph{full} PSN.
It is very clear that a SDS is a  particular PSN, where we take one
local function for each vertex, and  one permutation. The dynamic of
a PSN  is described by  Markov Chains of the transition matrix
associated to the State Space.
\begin{example}\label{exam1}  Let ${\mathcal
D}=(\Gamma;F_1,F_2,F_3;\mathbf{Z_2}^{3};\alpha _1, \alpha _2;
(c_{f_i})_{i=1}^8),$ be the following PSN:\\
\emph{(1)} The graph $\Gamma$: $  \ \ \ \begin{array}{c} 1 \bullet  {\overline{\hspace{.2in}}} \bullet \ 3\\
  \diagdown   \mid \\
 2\hspace{.15in}  \bullet  \end{array}.$\\
\emph{(2)}  Let  $\textbf{x}=(x_1,x_2,x_3)\in \{0,1\}^3$. In this
paper, we  always consider the operations over the finite field
$\Z_2=\{0,1\}$, but we use additionally the following notation
$\bar{x_1}=x_1+ 1$. Then the sets of local functions from
${\Z_2}^3\rightarrow {\Z_2}^3$ are the following
\[\begin{array}{l}
F_1 =\{f_{11}(\textbf{x})=(1,x_2,x_3),f_{12}(\textbf{x})=(\bar{x_1},x_2,x_3))\}  \\
F_2 = \{f_{21}(\textbf{x})=(x_1, x_1x_2,x_3)\}   \\
F_3=\{f_{31}(\textbf{x})=(x_1,x_2,x_1x_2),f_{32}(\textbf{x})=(x_1,x_2,x_1x_2+x_3)\}
\end{array}.\]
\emph{(3)} The schedules or permutations are
$\alpha_1=\left(\begin{array}{lll} 3&2& 1
\end{array}\right); \alpha_2=\left(\begin{array}{lll} 1&2& 3
\end{array}\right) .$
  We obtain the following table of
 functions, and we select all of them for ${\mathcal D}$ because the
 probabilities given by $C$.
\[\begin{array}{ll}
f_1=f_{31}\circ f_{21} \circ f_{11} & f_2=f_{11}\circ f_{21} \circ f_{31} \\
f_3=f_{32}\circ f_{21} \circ f_{11} & f_4=f_{11}\circ f_{21} \circ
 f_{32}\\
f_5=f_{31}\circ f_{21} \circ f_{12} & f_6=f_{12}\circ f_{21} \circ
f_{31}\\
f_7=f_{32}\circ f_{21} \circ f_{12}& f_8=f_{12}\circ f_{21}\circ
f_{32} \cr
\end{array}.\]
  The  update functions are the following:
  \[\begin{array}{ll}
  f_1(\textbf{x})=(1,x_2,x_2)& f_2(\textbf{x})=(1,x_1x_2,x_1x_2) \\
 f_3(\textbf{x})=(1,x_2,x_2+x_3) & f_4(\textbf{x})=(1, x_1x_2,x_1x_2+x_3)\\
f_5(\textbf{x})=(\bar{x_1},\bar{x_1}x_2,
 (x_1+1)x_2)  & f_6(\textbf{x})=(\bar{x_1},x_1x_2,x_1x_2)\\
f_7(\textbf{x})=(\bar{x_1},(x_1+1)x_2,(x_1+1)x_2+x_3)&
f_8(\textbf{x})=(\bar{x_1},x_1x_2,x_1x_2+x_3)
 \cr
\end{array}.\]
\emph{(4)} The  probabilities assigned are the following:
$c_{f_1}=.18; c_{f_2}=.12; c_{f_3}= .18; c_{f_4}=.12; c_{f_5}=.12;
c_{f_6}=.08; c_{f_7}=.12; c_{f_8}=.08$.
\end{example}
\begin{example}\label{MA} We notice that there are several PSN that we
can construct with the same initial data of functions and
permutations, but with different set of  probabilities, that is,
subsystems of ${\mathcal D}$. For example if $S'=\{f_1, f_2,
f_3,f_4\}$, $F_1'=\{f_{11}\}$, and $D=\{ d_{f_1}=.355, d_{f_2}=.211,
d_{f_3}=.18,d_{f_4}=.254\}$, then
\[{\mathcal B}=(\Gamma;F_1',F_2,F_3;{\mathbf Z_2}^{3};\alpha _1, \alpha _2; D=\{
.355,.211, .18,.254\} ),\] is a PSN too.
\end{example}
\section{Morphisms of Probabilistic Sequential Networks}\label{secHPSS}
The definition of morphism of PSN is a natural extension of the
concept of homomorphism of SDS. In this section we prove in Theorem
\ref{C3} a strong property, that is the distribution of
probabilities of  two homomorphic PSN are enough close to prove
Theorem \ref{MT}.

Consider the following two PSN ${\mathcal
D}_1=(\Gamma,(F_a)_{a=1}^{|\Gamma|=n},(k_a)_{a=1}^{n},(\alpha
^j)_{j},C)$ and  \\ ${\mathcal D}_2
=(\Delta,(G_b)_{b=1}^{|\Delta|=m},(k_b)_{b=1}^{m},(\beta
^j)_{j},D).$ We denote by $S_i$ the set of update functions of
${\mathcal D}_i$, $i=1,2$; and the following notation for $(u,  v)
\in k^n \times k^n$, and $(w,  z) \in k^m\times k^m$,
\[c_f(u,v)=\left\{\begin{array}{cc}
p(f) & if  \ f(u)=v \\
0 & otherwise  \cr
 \end{array}\right\}, \
 d_g(w,z))=\left\{\begin{array}{cc}
p(g) & if  \ g(w)=z \\
0 & otherwise  \cr
 \end{array}\right\}\]
where $p(h)$ is the probability of the function $h$.
\begin{definition}{\emph{(Homomorphisms of PSN)}}\label{homPSS} A
morphism $h:{\mathcal D}_1\rightarrow {\mathcal D}_2$  consist of:
\begin{enumerate}
\item [(1)]  A graph morphism  $\phi:\Delta \rightarrow \Gamma $, and a family of maps in the category \textbf{Set},
$(\widehat{\phi}_{b}:k_{\phi(b)}\rightarrow k_{b}\forall b\in \Delta
),$ that  induces the adjoint function $h_\phi$, see (\ref{defh}).
 \item [(2)]  The induced adjoint map $h_\phi: k^{n}\rightarrow k^{m}$ holds that
 for all update functions $f$ in $S_1$ there exists an
update function $g\in S_2$ such that $h$ is a SDS-morphism from
$(\Gamma,(f:k^n\rightarrow k^n),\alpha_j)$ to
$(\Delta,(g:k^m\rightarrow k^m),\beta _j)$. That is, the diagrams
\ref{eq1}, \ref{eq2}, and \ref{eq3} commute for all  $f$ and its
selected $g$.
\begin{equation}\label{eq4}
\begin{CD}
k^n @ > f=f_{\alpha_1}\circ \cdots \circ
f_{\alpha_n} >> k^n \\
@V h_\phi VV  @V h_\phi VV \\
k^m @ > g=g_{\beta_1}\circ \cdots \circ g_{\beta_m} >>  k^m
\end{CD}\end{equation}
 \end{enumerate}
\end{definition}
The second condition induces a  map $\mu $ from $S_1$ to $S_2$, that
is $\mu (f)=g$  if the selected function for $f$ is $g$.
 We say that a morphism
$h$ from ${\mathcal D}_1$ to ${\mathcal D}_2$ is a
\textbf{PSN-isomorphism} if $\phi$,  $h_\phi$, and $\mu$ are
bijective functions, and $d(h_\phi(u),h_\phi(g(u))= c(u,f(u))$ for
all $u$, in $k^n$, and all $f\in S_1$, and all $g\in S_2$. We denote
it by ${\mathcal D}_1\cong {\mathcal D}_2$. \vspace{.06in}

\hspace{-.1in}\textsc{Special morphisms.} Let ${\mathcal
D}=(\Gamma,(F_i)_{i=1}^n,(\alpha ^j)_{j\in{J}},C)$ be a PSN.

\hspace{-.1in}\textsc{Identity morphism.}
  The functions $\phi=id_\Gamma,$ $h_\phi=id_{k^n}$, and
$\mu =id_{S}$ define  the \emph{identity morphism} ${\mathcal
I}:\mathcal D\rightarrow \mathcal D$, and it is a trivial example of
a PSN-isomorphism.

\hspace{-.1in}\textsc{Monomorphism} A morphism $h$ of PSN is a
\emph{ monomorphism} if $\phi$ is surjective, $h_{\phi}$ is
injective, and  for all $f$, and its associated $g$ we have that
$d_{g}\leq c_{f}$.

\hspace{-.1in}\textsc{Epimorphism} A morphism is an
\emph{epimorphism} if $\phi$ is injective,  $h_{\phi}$ is
surjective, and for all $f$, and its associated $g$ we have that
$d_{g}\geq c_{f}$.

\hspace{-.1in}\textsc{Remark} If the morphism $h$ is either a
monomorphism or an epimorphism, then the function $\mu$ is not
necessary injective, neither surjective.

\hspace{2in}\textsc{Some theorems}
\begin{theorem}\label{C3} The morphism $h:\mathcal
D_1\rightarrow\mathcal D_2$ induces the following probabilistic
condition:

 For a fixed real number $0\le \epsilon<1$,  the map
$h_\phi$ satisfies the following:
\begin{equation}\label{epsilon}
 max_{u,v}|c_{f}(u,v)-d_{g}(h_\phi(u),h_\phi(v))|\le \epsilon
 \end{equation}
 for all  $f$ in $S_1$, and its selected
$g$ in $S_2$, and all $(u, v)\in k^n \times k^n$.
\end{theorem}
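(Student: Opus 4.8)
The plan is to fix a single update function $f\in S_1$ together with its selected image $g=\mu(f)\in S_2$, fix an arbitrary pair $(u,v)\in k^n\times k^n$, and bound the one scalar quantity $|c_{f}(u,v)-d_{g}(h_\phi(u),h_\phi(v))|$; taking the maximum over $(u,v)$ and then over all $f$ will produce the desired $\epsilon$. The only structural input I need is the commutativity of diagram (\ref{eq4}), which the morphism axiom in Definition \ref{homPSS} grants for each such pair $(f,g)$: it gives $g\circ h_\phi=h_\phi\circ f$, hence $g(h_\phi(u))=h_\phi(f(u))$ for every $u\in k^n$. This identity is what lets me rewrite the condition ``$g(h_\phi(u))=h_\phi(v)$'' that decides the value of $d_{g}$ as the condition ``$h_\phi(f(u))=h_\phi(v)$'', which is directly comparable to the condition ``$f(u)=v$'' that decides the value of $c_{f}$.

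With that reduction in hand I would argue by cases on whether $f(u)=v$. If $f(u)=v$, then $c_{f}(u,v)=p(f)$, and since $f(u)=v$ forces $h_\phi(f(u))=h_\phi(v)$, the commutativity identity yields $g(h_\phi(u))=h_\phi(v)$, so $d_{g}(h_\phi(u),h_\phi(v))=p(g)$; the difference is exactly $|p(f)-p(g)|$. If instead $f(u)\neq v$, then $c_{f}(u,v)=0$, while $d_{g}(h_\phi(u),h_\phi(v))$ equals $p(g)$ precisely when $h_\phi(f(u))=h_\phi(v)$ and equals $0$ otherwise, so the difference is either $p(g)$ or $0$. Collecting the cases gives $\max_{u,v}|c_{f}(u,v)-d_{g}(h_\phi(u),h_\phi(v))|\le\max\{\,|p(f)-p(g)|,\,p(g)\,\}$.

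Because the assigned probabilities form a distribution over the update functions, every value $p(f),p(g)$ lies in $[0,1]$ and is strictly below $1$ as soon as a PSN carries more than one update function, so the bound above is a real number in $[0,1)$; setting $\epsilon$ to be the maximum of these bounds over all $f\in S_1$ finishes the argument. I expect the genuine obstacle to be the second case, where $f(u)\neq v$ yet $h_\phi(f(u))=h_\phi(v)$: here a nonzero contribution $p(g)$ survives purely because $h_\phi$ may fail to be injective, and it is exactly this term that the monomorphism and epimorphism hypotheses ($d_{g}\le c_{f}$ or $d_{g}\ge c_{f}$) are later used to control. For the present statement it is enough to absorb it into $\epsilon$, but I would flag that, absent any injectivity or probability-comparison assumption, this is the only term preventing $\epsilon$ from being taken arbitrarily small.
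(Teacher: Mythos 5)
Your proof is correct, and although it rests on the same key input as the paper's proof---the commutativity $g\circ h_\phi=h_\phi\circ f$ of diagram (\ref{eq4})---it is organized differently and in a more informative way. The paper argues by contradiction: it supposes $|c_{f}(u,v)-d_{g}(h_\phi(u),h_\phi(v))|\ge 1$ for some pair, which forces one of the two extreme configurations $c_f=1,\ d_g=0$ or $c_f=0,\ d_g=1$, and then rules out each one. You instead compute the difference directly, splitting on whether $f(u)=v$, and arrive at the explicit bound $\max\{|p(f)-p(g)|,\,p(g)\}$, so that $\epsilon$ can be taken to be the maximum of these finitely many quantities over $f\in S_1$; this exhibits a concrete $\epsilon$ rather than merely asserting its existence, and your two cases correspond exactly to the paper's two impossible configurations, so the mathematical content is parallel. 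What your version buys, beyond the quantitative bound, is an honest identification of the one genuinely delicate point: when $f(u)\ne v$ but $h_\phi(f(u))=h_\phi(v)$ (possible only when $h_\phi$ fails to be injective), the difference equals $p(g)$, and if $S_2$ consists of a single update function carrying probability $1$, this equals $1$ and no $\epsilon<1$ exists. The paper's proof silently passes over precisely this configuration: in its second case it concludes $d_{g}(h_\phi(u),h_\phi(v))<1$ because some probability mass leaves $h_\phi(u)$ toward $h_\phi(v_1)$ with $v_1=f(u)$, but that argument requires $h_\phi(v_1)\ne h_\phi(v)$, which is exactly what fails in the scenario you flag. So the hypothesis you state (more than one update function, each of positive probability, in the target PSN) is a real assumption that both proofs need; yours makes it explicit, the paper's does not.
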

\begin{proof} Suppose $\phi,$ and $h_\phi$ satisfy  the Definition
\ref{homPSS}; and \[|c_{f}(u,v)-d_{g}(h_\phi(u),h_\phi(v))|\ge 1\]
for some $(u,v)\in k^n\times k^n$. Then we have one of the following
cases
\begin{itemize}
\item [1.] $c_{f}(u,v)=1$ and
$d_{g}(h_\phi(u),h_\phi(v))=0$. It is impossible by  condition (2)
in  definition \ref{homPSS}. In fact, if we have an arrow going from
$u$ to $v=f(u)$, then  there exists an arrow going from $h_\phi(u)$
to $h_\phi(v)=g(h_\phi(u))$ by diagram \ref{eq4}, and the
probability $d_{g}(h_\phi(u),h_\phi(v))\ne 0$.
\item[2.] $c_{f}(u,v)=0$, and $d_{g}(h_\phi(u),h_\phi(v))=1$.
 It is impossible because at least there exists  one element
$v_1\in k^n$, such that $f(u)=v_1\in k^{n}$ and $c_{f}(u,v_1)\ne 0$,
then $d_{g}(h_\phi(u),h_\phi(v_1))\ne 0$ too. Since the sum of
probabilities of all arrow going up from $h_\phi(u)$ is equal $1$,
then $d_{g}(h_\phi(u),h_\phi(v))<1$, and our claim holds.
\end{itemize}
Therefore the  condition holds, and always $\epsilon$ exists.
\end{proof}
In the next theorem we will use the following notation:
\begin{itemize}
\item [(1)]  $S_\phi=\mu (S_1)$.
\item [(2)]  $g^{t}=g\circ g \circ \cdots \circ g$, $t$
times.
\item [(3)] $p_t(u,v)=\sum_{f^{t}}c_{f^{t}}(u,v)$, and $p_t(h_\phi(u),h_\phi(v))=\sum_{g^{t}}d_{g^{t}}(h_\phi(u),h_\phi(v))$
\item[(4)] $T_i$ denotes the transition matrix of  ${\mathcal
D}_i$, for $i=1,2$, and  $p_t(u,v)=({T_i}^t)_{(u,v)}$.
\end{itemize}
 \begin{theorem}\label{MT}
If $h:{\mathcal D}_1\longrightarrow{\mathcal D}_2$ is either a
monomorphism or an epimorphism of probabilistic sequential networks,
then:
\[\lim
_{t\rightarrow\infty}|({T_1})^{t}_{u,v}-({T_2})^{t}_{h_\phi(u),h_\phi(v)}|=0,\]
 for all   $(u,v)\in k^n \times k^n$. That is, the equilibrium state
 of both systems are equals.
 \end{theorem}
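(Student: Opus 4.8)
The plan is to reduce the statement to a comparison of the two equilibrium (stationary) distributions and then to identify these distributions through the map $h_\phi$. First I would invoke ergodicity of both networks: as recalled in Section~\ref{SDS}, the random perturbation makes each associated Markov chain ergodic, so each transition matrix $T_i$ is primitive and its powers converge to a rank-one matrix, $\lim_{t\to\infty}(T_i^{t})_{x,y}=\pi^{(i)}(y)$, where $\pi^{(i)}$ is the unique stationary distribution of $\mathcal D_i$ and the limit does not depend on the initial state $x$. Applying this with $(x,y)=(u,v)$ and with $(x,y)=(h_\phi(u),h_\phi(v))$, the quantity to be estimated converges and
\[
\lim_{t\to\infty}\bigl|(T_1^{t})_{u,v}-(T_2^{t})_{h_\phi(u),h_\phi(v)}\bigr|
=\bigl|\pi^{(1)}(v)-\pi^{(2)}(h_\phi(v))\bigr|.
\]
Thus it suffices to prove $\pi^{(1)}(v)=\pi^{(2)}(h_\phi(v))$ for every $v\in k^n$; the vanishing of the limit, and hence the equality of equilibrium states, follows.

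Next I would translate the homomorphism condition into a relation between the two transition matrices. By Definition~\ref{homPSS} every $f\in S_1$ has a selected $g=\mu(f)\in S_2$ with $g\circ h_\phi=h_\phi\circ f$ (diagram~\eqref{eq4}); hence $f(u)=v$ forces $g(h_\phi(u))=h_\phi(v)$, so $h_\phi$ intertwines the two dynamics and carries each arrow of $\mathcal S_{\mathcal D_1}$ to an arrow of $\mathcal S_{\mathcal D_2}$. Summing the selection probabilities over the fibre of a target arrow, I would then relate the transition entry $(T_2)_{h_\phi(u),z}$ to the aggregated source mass $\sum_{v\in h_\phi^{-1}(z)}(T_1)_{u,v}$, exhibiting $h_\phi$ as a consistent coarse-graining of the chains. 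The aim of this step is the pushforward identity $\pi^{(2)}(z)=\sum_{v\in h_\phi^{-1}(z)}\pi^{(1)}(v)$, which by uniqueness of the stationary distribution (ergodicity) would follow once the aggregated transitions are shown to be themselves $\mathcal D_2$-transitions.

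Finally I would use the defining inequalities of the two kinds of morphism, together with normalisation, to collapse the fibre sum to a single term. For a monomorphism $h_\phi$ is injective, so $h_\phi^{-1}(h_\phi(v))=\{v\}$ and the pushforward identity reads $\pi^{(2)}(h_\phi(v))=\pi^{(1)}(v)$ at once. For an epimorphism I would combine $d_g\ge c_f$ with the stochasticity $\sum_{f}c_f=\sum_{g}d_g=1$ and with Theorem~\ref{C3}: since by~\eqref{epsilon} no matched pair of probabilities can differ by as much as $1$, the bounds $d_{\mu(f)}\ge c_f$ weighed against the total-mass constraint must become equalities on the support, forcing each nonempty fibre of $h_\phi$ to carry a single unit of equilibrium mass and again yielding $\pi^{(2)}(h_\phi(v))=\pi^{(1)}(v)$. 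I expect this last step to be the main obstacle. Because $\mu$ need be neither injective nor surjective (see the Remark), the bookkeeping that turns the one-sided bounds $d_g\lessgtr c_f$ into an exact matching of equilibrium mass along the fibres of $h_\phi$ is delicate, and it is precisely here that uniqueness of $\pi^{(i)}$ and the $\epsilon<1$ estimate of Theorem~\ref{C3} must do the real work; the secondary point to pin down carefully is that both chains are genuinely ergodic, so that the rank-one limit used in the first paragraph is legitimate.
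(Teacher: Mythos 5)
There is a genuine gap --- in fact two, and they sit exactly where your plan needs to do its work. First, your opening reduction rests on ergodicity of both chains: you assert $\lim_{t\to\infty}(T_i^{t})_{x,y}=\pi^{(i)}(y)$, a rank-one limit independent of the initial state $x$. But the ergodicity remark in Section~\ref{SDS} concerns Probabilistic Boolean Networks \emph{with random perturbation}; Definition~\ref{PSN} of a PSN contains no perturbation, and the transition matrix of a PSN is a finite mixture of deterministic maps, whose chain in general has several recurrent classes and periodic behaviour. So the limit you use need not exist, let alone be independent of $x$, and nothing in the paper licenses it. The paper's own proof never invokes stationary distributions or ergodicity: it pairs each $f\in S_1$ with $g=\mu(f)$, uses the defining inequality ($d_g\le c_f$ for a monomorphism, the reverse for an epimorphism) to get $|c_{f^t}(u,f^t(u))-d_{g^t}(h_\phi(u),g^t(h_\phi(u)))|=|c_f^{\,t}-d_g^{\,t}|\le c_f^{\,t}$ by induction, then bounds $|p_t(u,v)-p_t(h_\phi(u),h_\phi(v))|$ by $\sum_{f^t}c_f^{\,t}+\delta^t(u,v)$ and lets $t\to\infty$, so the comparison of matrix powers is made directly and term by term, with no appeal to equilibria.

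Second, even granting ergodicity, the two claims that carry your argument are precisely the ones you leave open. The pushforward identity $\pi^{(2)}(z)=\sum_{v\in h_\phi^{-1}(z)}\pi^{(1)}(v)$ would require exact lumpability, i.e.\ an \emph{equality} of transition masses along fibres of $h_\phi$; but the morphism axioms supply only one-sided inequalities $d_{\mu(f)}\le c_f$ or $d_{\mu(f)}\ge c_f$, and since $\mu$ need be neither injective nor surjective (the paper's Remark), $S_2$ may contain update functions outside $\mu(S_1)$ that move probability mass off the image of $h_\phi$, so the aggregated $\mathcal D_1$-transitions are not $\mathcal D_2$-transitions and the coarse-graining picture breaks. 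Your proposed repair in the epimorphism case --- combining $d_g\ge c_f$ with normalisation and Theorem~\ref{C3} --- cannot work as stated: the estimate \eqref{epsilon} only says matched probabilities never differ by a full unit ($\epsilon<1$), which is far too weak to force equality of probabilities on supports. You flag this step yourself as ``the main obstacle'' where ``the real work'' must happen; as it stands, the proposal reduces the theorem to unproven claims (ergodicity, lumpability, equality of stationary masses) that are at least as hard as, and arguably stronger than, the statement itself, so it is a plan rather than a proof.
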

\begin{proof}
The condition giving by Theorem \ref{C3} asserts that, there exists
a fixed real number $0\le \epsilon<1$,  such that the map $h_\phi$
satisfies the following:
\begin{equation*}
 {\max}_{u,v}|c_{f}(u,v)-d_{g}(\phi(u),\phi(v))|\le \epsilon
 \end{equation*}
 for all  $f$ in $S_1$, and its selected
$g$ in $S_2$, and  all $(u, v)\in k^n \times k^n$.

If there is a function $f$ going from $u$ to $v=f(u)$  in $k ^{n}$,
then there exists a function $g$ going from $h_\phi(u)$ to
$h_\phi(v)$, such that  $g(h_\phi(u))=h_\phi(f(u))$.

Because $c_{f^2}(u,f^2(u))=c_{f}(u,f(u))c_{f}(f(u),f^2(u))={c_f}^2$,
and \[d_{g^2}(h_\phi(u),g^2(h_\phi(u)))=
d_{g}(h_\phi(u),g(h_\phi(u)))d_{g}(g(h_\phi(u)),g^2(h_\phi(u)))={d_g}^2.\]
We have
\[|c_{f^2}(u,f^2(u))-d_{g^2}(h_\phi(u),g^2(h_\phi(u)))|=|{c_f}^2-{d_g}^2|\]

If $h$ is a monomorphism, then $c_f\geq d_g$, for all $f$ and its
associated $g$. Then
\[|c_{f^2}(u,f^2(u))-d_{g^2}(h_\phi(u),g^2(h_\phi(u)))|=|{c_f}^2-{d_g}^2|\leq {c_f}^2.\]
By induction we have that
\[|c_{f^t}(u,f^t(u))-d_{g^t}(h_\phi(u),g^t(h_\phi(u)))|=|{c_f}^t-{d_g}^t|\leq {c_f}^t.\]
 This result implies that
\[|p_t(u,v)-p_t(h_\phi(u),h_\phi(v))|\leq \sum_{f^t}{c_{f}}^t + \delta ^t(u,v)\]
where $\delta ^t(u,v)=\sum_{g\in \overline G(u,v)}{d_{g}}^t,$ and
$\overline G(u,v)=\{g\in G|g(h_\phi(u))=h_\phi(v),\ and \ g\ne
\overline \mu(S_1)\}$.

Then, when $t$ goes to infinity the sum $\sum_{f^t}{c_{f}}^t +
\delta ^t(u,v)$ goes to $0$, and the theorem holds. If $h$ is an
epimorphism we obtain the same results, so the theorem holds again.

 \end{proof}
\section{The category  \textbf{PSN}}\label{CAT}
In this section, we  prove that the PSN with the \emph{morphisms}
form a category, that we denote by \textbf{PSN}. For definitions,
and results in Categories see  \cite{ML}.
\begin{theorem}\label{COM} Let
$ h_1=(\phi _1, h_{\phi _1}):{\mathcal D}_1\rightarrow {\mathcal
D}_2$ and   $h_2=(\phi _2, h_{\phi _2}):{\mathcal D}_2\rightarrow
{\mathcal D}_3$ be two morphisms of PSN. Then the composition
$h=(\phi , h_{\phi })=(\phi _2, h_{\phi _2}) \circ (\phi _1, h_{\phi
_1})=h_2\circ h_1: {\mathcal D}_1\rightarrow {\mathcal D}_3$  is
defined as follows: $h=(\phi , h_{\phi })=(\phi_1 \circ \phi _2,
h_{\phi _2} \circ h_{\phi _1})$ is  a morphism of PSN. The function
$\mu_h=\mu_{h_2}\circ \mu_{h_1}.$
\end{theorem}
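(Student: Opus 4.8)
The plan is to verify the two clauses of Definition \ref{homPSS} for the proposed pair $(\phi,h_\phi)=(\phi_1\circ\phi_2,\,h_{\phi_2}\circ h_{\phi_1})$, treating the whole construction as a pasting of commuting squares. First I would fix notation, writing $\Gamma_i$ for the support graph of $\mathcal D_i$, so that $\phi_1\colon\Gamma_2\to\Gamma_1$ and $\phi_2\colon\Gamma_3\to\Gamma_2$ are the underlying digraph morphisms. Their set-theoretic composite $\phi=\phi_1\circ\phi_2\colon\Gamma_3\to\Gamma_1$ is again a digraph morphism (composites of digraph morphisms preserve edges), which settles the combinatorial half of clause~(1) at once and confirms the contravariant direction required by the statement.

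The first genuine step is to exhibit the family of set maps witnessing that $h_{\phi_2}\circ h_{\phi_1}$ is itself an adjoint map attached to $\phi$. For each vertex $c$ of $\Gamma_3$ I would set $\widehat{\phi}_c:=(\widehat{\phi_2})_c\circ(\widehat{\phi_1})_{\phi_2(c)}\colon k_{\phi(c)}\to k_c$, which typechecks precisely because $(\widehat{\phi_1})_{\phi_2(c)}$ has codomain $k_{\phi_2(c)}$, the source of $(\widehat{\phi_2})_c$. Unfolding the defining relation \eqref{defh} through the evaluation pairing then yields
\[
\langle h_{\phi_2}(h_{\phi_1}(x)),c\rangle
=(\widehat{\phi_2})_c(\langle h_{\phi_1}(x),\phi_2(c)\rangle)
=(\widehat{\phi_2})_c\bigl((\widehat{\phi_1})_{\phi_2(c)}(x_{\phi(c)})\bigr)
=\widehat{\phi}_c(x_{\phi(c)}),
\]
which is exactly the coordinate description of the adjoint map induced by $\phi$ together with $(\widehat{\phi}_c)$. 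This identifies $h_\phi=h_{\phi_2}\circ h_{\phi_1}$ as a legitimate adjoint map and completes clause~(1).

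For clause~(2) I would chase the commuting squares \eqref{eq4}. Given an update function $f\in S_1$, since $h_1$ is a morphism there is $g'=\mu_{h_1}(f)\in S_2$ with $g'\circ h_{\phi_1}=h_{\phi_1}\circ f$; since $h_2$ is a morphism there is $g=\mu_{h_2}(g')\in S_3$ with $g\circ h_{\phi_2}=h_{\phi_2}\circ g'$. Pasting the two squares gives
\[
g\circ h_\phi=g\circ h_{\phi_2}\circ h_{\phi_1}=h_{\phi_2}\circ g'\circ h_{\phi_1}=h_{\phi_2}\circ h_{\phi_1}\circ f=h_\phi\circ f,
\]
so \eqref{eq4} commutes for $h$ with the selected $g$. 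Defining $\mu_h(f):=g=\mu_{h_2}(\mu_{h_1}(f))$ then both verifies the final sentence of the statement and shows this selection is well defined.

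The routine parts are the associativity of composition of set maps and the repeated appeal to \eqref{defh}. The main obstacle I anticipate is purely bookkeeping: keeping the contravariant direction of $\phi_1\circ\phi_2$ consistent with the covariant direction of $h_{\phi_2}\circ h_{\phi_1}$, and making sure the composite family $\widehat{\phi}_c$ is indexed over the vertices of the correct graph $\Gamma_3$ with the correct source and target local-state sets, so that the typechecking in the middle step genuinely goes through. No analytic or probabilistic input is needed, since the probability data $C,D$ play no role in composability; only the \emph{existence} of the selected $g$ for each $f$, guaranteed by the two given morphisms, matters here.
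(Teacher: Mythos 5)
Your verification of clause (1) is sound and in fact more explicit than the paper's: the composite family $\widehat{\phi}_c=(\widehat{\phi_2})_c\circ(\widehat{\phi_1})_{\phi_2(c)}$ together with the pairing computation correctly exhibits $h_{\phi_2}\circ h_{\phi_1}$ as the adjoint map induced by $\phi_1\circ\phi_2$, and the definition $\mu_h=\mu_{h_2}\circ\mu_{h_1}$ matches the statement. You are also right that the probability data $C,D$ play no role here, since plain morphisms (unlike monomorphisms and epimorphisms) impose no condition on the probabilities.

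There is, however, a genuine gap in your clause (2). Definition \ref{homPSS}(2) does not merely require that the global square \ref{eq4} commute for each $f\in S_1$ and its selected $g$; it requires that $h$ be an \emph{SDS-morphism} from $(\Gamma,(f),\alpha_j)$ to $(\Delta,(g),\beta_j)$, i.e., that the vertex-local diagrams \ref{eq1} and \ref{eq2} also commute: for each vertex $\alpha_i$ of the graph of $\mathcal D_1$,
\[
h_\phi\circ f_{\alpha_i}=\bigl(g_{\gamma_l}\circ\cdots\circ g_{\gamma_s}\bigr)\circ h_\phi,
\qquad \{\gamma_l,\ldots,\gamma_s\}=\phi^{-1}(\alpha_i)=\phi_2^{-1}\bigl(\phi_1^{-1}(\alpha_i)\bigr),
\]
with the right-hand composite taken in an order admissible for the schedule of $\mathcal D_3$ (the orders $\tau_\beta$ on connected components). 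Your pasting uses only the weaker global consequence of $h_1$ and $h_2$ being morphisms, and therefore only delivers the global square \ref{eq3} for $h$; it never touches the local data. The pasting can be pushed down to the local level: use $h_1$'s local squares to rewrite $h_{\phi_1}\circ f_{\alpha_i}$ as an ordered composite of local functions of $\mathcal D_2$ over $\phi_1^{-1}(\alpha_i)$, then apply $h_2$'s local squares to each factor. But one must then verify that the resulting ordered composite over $\phi^{-1}(\alpha_i)$ respects an admissible order for $\mathcal D_3$'s schedule, and that order-compatibility is the real content of the theorem. The paper disposes of precisely this point by citing Proposition and Definition 2.7 of \cite{LP2} (composites of SDS-morphisms are SDS-morphisms); as written, your argument proves a strictly weaker statement than the definition of morphism demands.
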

\begin{proof}
The composite function $\phi=\phi_1 \circ \phi_2 $ of two graph
morphisms is again a graph morphism. The composite function
$h_\phi=h_{\phi _2} \circ h_{\phi _1}$ is again a digraph morphism
which satisfies the conditions in Definition \ref{homPSS}, by
Proposition and Definition 2.7 in \cite{LP2}.  So, $h=(\phi,
h_\phi)$ is again a morphism. of PSN.
\end{proof}
\begin{theorem}
The Probability Sequential Networks together with the homomorphisms
of PSN form the category \textbf{PSN}.
\end{theorem}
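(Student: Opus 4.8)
The plan is to verify the three defining axioms of a category: that morphisms of PSN are closed under a well-defined composition, that this composition is associative, and that every object carries a two-sided identity morphism. The first axiom is already established, since Theorem \ref{COM} shows that if $h_1:\mathcal D_1\to\mathcal D_2$ and $h_2:\mathcal D_2\to\mathcal D_3$ are morphisms, then the triple $h_2\circ h_1=(\phi_1\circ\phi_2,\,h_{\phi_2}\circ h_{\phi_1},\,\mu_{h_2}\circ\mu_{h_1})$ again satisfies Definition \ref{homPSS}; hence each $\mathrm{Hom}(\mathcal D_1,\mathcal D_2)$ is closed under composition. Since every piece of data defining a PSN is finite, each $\mathrm{Hom}$-collection is a genuine set, so no set-theoretic size issue arises.

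First I would treat associativity. A morphism of PSN is a triple $(\phi,h_\phi,\mu)$ whose three components live respectively in the category of graphs, the category $\mathbf{Set}$, and the set of maps between the update-function sets $S_i$. Composition acts componentwise: the graph component composes contravariantly while $h_\phi$ and $\mu$ compose covariantly. Thus for three composable morphisms the identity $(h_3\circ h_2)\circ h_1=h_3\circ(h_2\circ h_1)$ reduces, slot by slot, to the associativity of composition of graph morphisms, of maps in $\mathbf{Set}$, and of functions $S_i\to S_j$; each of these is standard, so the PSN composition is associative. I would also note here that the adjoint map $h_\phi$ is fully determined by $\phi$ together with the family $(\widehat{\phi}_b)$ through (\ref{defh}), so associativity in the $h_\phi$-slot is consistent with the composition of the underlying set-maps $\widehat{\phi}_b$.

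Next I would handle the identity laws using the identity morphism $\mathcal I_{\mathcal D}=(\mathrm{id}_\Gamma,\mathrm{id}_{k^n},\mathrm{id}_S)$ introduced above, which trivially satisfies Definition \ref{homPSS} because every diagram (\ref{eq4}) commutes once all its arrows are identities. For any morphism $h:\mathcal D_1\to\mathcal D_2$, the equalities $\mathcal I_{\mathcal D_2}\circ h=h$ and $h\circ\mathcal I_{\mathcal D_1}=h$ follow componentwise from the unit laws in graphs, in $\mathbf{Set}$, and for the functions $\mu$. Combining closure under composition (Theorem \ref{COM}), associativity, and these unit laws yields that the PSN with their morphisms form the category $\mathbf{PSN}$.

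The part that deserves the most care — and the only place where anything beyond bookkeeping occurs — is confirming that the contravariant and covariant directions remain consistent throughout, i.e. that with each $\phi_i$ reversing arrows while $h_{\phi_i}$ and $\mu_{h_i}$ preserve them, the composite triple really has source $\mathcal D_1$ and target $\mathcal D_3$ and still satisfies the morphism condition (2) of Definition \ref{homPSS}. This compatibility is precisely the content of Theorem \ref{COM}, whose proof invokes Proposition and Definition 2.7 of \cite{LP2}; once it is in hand, associativity and the identity laws are purely formal consequences of the componentwise description of composition, and the theorem follows.
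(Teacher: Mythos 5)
Your proposal is correct and follows essentially the same route as the paper, whose own proof is simply the remark that the theorem ``easily follows from Theorem \ref{COM}'': you rely on Theorem \ref{COM} for closure under composition and then verify associativity and the unit laws componentwise, using the identity morphism $(\mathrm{id}_\Gamma,\mathrm{id}_{k^n},\mathrm{id}_S)$ already defined in the paper. The only difference is that you spell out the routine slot-by-slot verifications (including the contravariance of the $\phi$-component) that the paper leaves implicit, which is a faithful elaboration rather than a different argument.
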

\begin{proof} Easily  follows from Theorem \ref{COM}.
\end{proof}
\begin{theorem}\label{FSDS} The SDS together with the morphisms defined in \cite{LP2} form a full subcategory of the category
\textbf{PSN}.
\end{theorem}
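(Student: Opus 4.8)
The plan is to exhibit \textbf{SDS} inside \textbf{PSN} as the objects whose data are degenerate in the probabilistic directions, and then to check that the two notions of morphism coincide on such objects. First I would describe the embedding on objects: given an SDS $\mathcal{F}=(\Gamma,(k_i)_{i=1}^n,(f_i)_{i=1}^n,\alpha)$, associate to it the PSN
\[
\mathcal{D}_{\mathcal{F}}=(\Gamma,\{F_a\}_{a=1}^n,(k_a)_{a=1}^n,(\alpha),C)
\]
in which every local set is a singleton $F_a=\{f_a\}$, there is exactly one update schedule $\alpha$, and the unique update function $f=f_{\alpha_1}\circ\cdots\circ f_{\alpha_n}$ carries probability $C=\{c=1\}$. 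Then $|S|=1$ and the State Space $\mathcal{S}_{\mathcal{D}_{\mathcal{F}}}$ is precisely $\mathcal{S}_f$ with every arrow weighted $1$. This assignment is injective on objects, and a PSN of this degenerate form records exactly the datum of an SDS, so the two object classes are identified.

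Next I would show the morphism sets agree. Let $\mathcal{F},\mathcal{G}$ be SDS with associated PSN $\mathcal{D}_{\mathcal{F}},\mathcal{D}_{\mathcal{G}}$, so that $S_1=\{f\}$ and $S_2=\{g\}$ are singletons with $c_f=d_g=1$. By Definition~\ref{homPSS}, a \textbf{PSN}-morphism $h=(\phi,h_\phi)$ is a graph morphism $\phi$ together with the induced adjoint map $h_\phi$ such that for the unique $f\in S_1$ there is the unique $g\in S_2$ making diagram \ref{eq4} commute; the induced map $\mu$ is the only possible map $\{f\}\to\{g\}$, and the weights match trivially since $c_f=d_g=1$. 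But the commutativity of \ref{eq4}, together with the square conditions \ref{eq1}--\ref{eq3}, is exactly the definition of a homomorphism of SDS $\mathcal{F}\to\mathcal{G}$ recalled in Section~\ref{SDS}. Hence $\mathrm{Hom}_{\mathbf{PSN}}(\mathcal{D}_{\mathcal{F}},\mathcal{D}_{\mathcal{G}})=\mathrm{Hom}_{\mathbf{SDS}}(\mathcal{F},\mathcal{G})$ as sets, with no extra probabilistic data attached.

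Finally I would verify the subcategory axioms and fullness. Identities correspond because the \textbf{PSN} identity $(\mathrm{id}_\Gamma,\mathrm{id}_{k^n},\mathrm{id}_S)$ on $\mathcal{D}_{\mathcal{F}}$ is the \textbf{SDS} identity on $\mathcal{F}$, and composition is preserved because, by Theorem~\ref{COM}, composites in \textbf{PSN} are formed by composing the graph morphisms and the adjoint maps $h_\phi$, which is precisely composition in \textbf{SDS}. This makes the object embedding a functor, so \textbf{SDS} is a subcategory, and fullness follows at once from the preceding paragraph since every \textbf{PSN}-morphism between objects of the form $\mathcal{D}_{\mathcal{F}}$ already lies in \textbf{SDS}. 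I expect the only point needing genuine care — the main, if modest, obstacle — to be this fullness step: one must confirm that the existential ``for all $f$ there exists $g$'' clause and the probability and $\mu$ data in Definition~\ref{homPSS} really do collapse to the single commuting square of an SDS-morphism when $S_1,S_2$ are singletons of weight $1$, so that no spurious morphisms and no additional constraints survive.
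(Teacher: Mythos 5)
Your proof is correct and takes essentially the same approach the paper intends: the paper's own proof of Theorem \ref{FSDS} is literally just ``It is trivial,'' relying on its earlier remark in Section \ref{PSN1} that an SDS is a particular PSN with one local function per vertex, one schedule, and a single update function carrying probability $1$ --- exactly your embedding. Your write-up simply supplies the details (identification of Hom-sets via Definition \ref{homPSS} on singleton $S_1,S_2$, identities, composition via Theorem \ref{COM}, and fullness) that the paper leaves unstated.
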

\begin{proof}
It is trivial.
\end{proof}
 \section{Simulation and  examples}\label{EHOM}
In this section we give several examples of morphisms, and
simulations. In the second example we show how the Definition
\ref{homPSS} is verified under the supposition that a function
$\phi$ is defined. So, we have two examples in (\ref{EHOM}.2), one
with $\phi$ the natural inclusion, and the second  with $\phi$  a
surjective map. The third, and the fourth examples are morphisms
that represent simulation of $\mathcal G$ by $\mathcal F$. We begin
this section with the definitions of Simulation and sub-PSN.

\hspace{-.15in}\textsc{Definition of Simulation in the category
\textbf{ PSN}}.
 The probabilistic sequential network $\mathcal G$ is simulated by $\mathcal F$ if
there exists a  monomorphism $h: {\mathcal F}\rightarrow {\mathcal
G}$ or an epimorphism $h':{\mathcal G}\rightarrow {\mathcal F}$.

\hspace{-.15in}\textsc{Sub Probabilistic Sequential
Network}\label{subPSS}
 We say that a PSN ${\mathcal G}$ is a
sub Probabilistic Sequential Network of ${\mathcal F}$ if there
exists a monomorphism from ${\mathcal G}$ to ${\mathcal F}$. If the
map $\mu$ is  not a bijection, then we say that it is a proper
sub-PSN.

\subsection{Examples}\label{examples}\hspace{2.2in}

 \textbf{(\ref{examples}.1)} In the examples \ref{exam1}, and \ref{MA} we
define  two PSN $\mathcal D$ and $\mathcal B$. The functions $\phi
=Id_\Gamma$, $h_\phi=Id_{{k}^{n}}$, and $\mu$ the natural inclusion
from $S_1$ to $S_2$ define the inclusion $\iota_\mu: {\mathcal
B}\rightarrow {\mathcal D}$. It is clear that the inclusion is a
monomorphism, so  $\mathcal D$ is simulated by ${{\mathcal B}}$.

 \textbf{(\ref{examples}.2)} Consider the two
graphs below
\[\hspace{.3in}\Gamma \ \
\begin{array}{ccc} 2 \bullet&{\overline{\hspace{.2in}}} &\bullet 3\\
\mid &   &\mid  \\
1 \bullet& {\overline{\hspace{.2in}}} &\bullet 4
\end{array}\hspace{.3in} \hbox{   and       }\hspace{.3in} \Delta \  \  \begin{array}{ccc} 2 \bullet & {\overline{\hspace{.2in}}} &\bullet 3   \\
\mid &   & \\
1\bullet& &
\end{array}\]
 Suppose that the functions
associated to the vertices are the  families $\{f_1,f_2,f_3,f_4\}$,
 for $\Gamma $ and $\{g_1,g_2,g_3\}$ for $\Delta $. The
permutations are $\alpha_1=(4\ 3\ 2\ 1)$, $\alpha_2=(4\ 1\ 3\ 2)$
and $\beta_1=(3\ 2\ 1)$, $ \beta_2=(1\ 3\ 2)$, so, $S=\{f=f_4\circ
f_3 \circ f_2 \circ f_1; \underline{f}=f_1\circ f_4 \circ f_3 \circ
f_2\}$, and $X=\{g= g_3 \circ g_2 \circ g_1; \underline{g}= g_1
\circ g_3 \circ g_2\}$. Then, we  have constructed two PSN, each one
with two permutations and only one function associated to each
vertex in the graph; denoted by:
\[{\mathcal D}=(\Gamma;f_1,f_2,f_3,f_4; \alpha_1, \alpha_2;C ) \ and \
{\mathcal B}=(\Delta;g_1,g_2,g_3;\beta_1, \beta_2;D ).\]
\textbf{Case (a)} We assume that there exists a morphism
$h:{\mathcal D}\rightarrow{\mathcal B}$, with the graph morphism
$\phi: \Delta \rightarrow \Gamma $  giving by $\phi(1)=1,\
\phi(2)=2, \ \phi(3)=3$. Suppose the functions \[(\widehat{\phi}_b
:{k}_{\phi(b)}\rightarrow k_b, \forall b\in \Delta),\] are giving,
and the adjoint function
\[h_{\phi}:k^4\rightarrow k^3,
 \  h_\phi (x_1,x_2,x_3,x_4)=(\hat{\phi}_1(x_1),\hat{\phi}_2(x_2),\hat{\phi}_3(x_3))\] is defined too.
If $h$ is a morphism, which satisfies the definition (\ref{homPSS}),
then the following diagrams commute:
\[\begin{CD}
k^4 @ >f_4 >> k^4 @ > f_3 >> k^4 @ >f_2>> k^4
@ >f_1>> k^4  \\
@V h_{\phi} VV  @V h_{\phi} VV @V h_{\phi} VV @ V h_{\phi} VV @ V h_{\phi} VV \\
k^3 @ > Id >> k^3 @ > g_3 >> k^3 @ > g_2 >> k^3 @
> g_1 >> k^3
\end{CD}, \]
\[\begin{CD}
k^4 @ >f_1 >> k^4 @ > f_4 >> k^4 @ >f_3>> k^4
@ >f_2>> k^4  \\
@V h_{\phi} VV  @V h_{\phi} VV @V h_{\phi} VV @ V h_{\phi} VV @ V h_{\phi} VV \\
k^3 @ > g_1 >> k^3 @ > Id >> k^3 @ > g_3 >> k^3 @
> g_2 >> k^3
\end{CD}, \]

 \[\begin{CD}
 k^4 @ >f >> k^4\\
 @V h_{\phi} VV  @V h_{\phi} VV \\
k^3 @ > g >> k^3
  \end{CD}\hspace{.2in} \begin{CD} k^4 @ > \underline f >> k^4\\
 @V h_{\phi} VV  @V h_{\phi} VV \\
 k^3 @ > \underline g >> k^3
  \end{CD}.\]
\textbf{Case (b)} Consider now the map $\phi: \Gamma \rightarrow
\Delta$, defined by $\phi(1)=1$, $\phi(2)=2$, $\phi(3)=3$, and
$\phi(4)=3$. If there exists a morphism  $h : {\mathcal
B}\rightarrow {\mathcal D}$ that satisfies Definition \ref{homPSS},
then
\[\begin{CD}
 k^3 @ > g_3 >> k^3 @> g_2 >> k^3 @ > g_1 >> k^3  \\
 @V h_{\phi} VV  @V h_{\phi} VV @V h_{\phi} VV @ V h_{\phi} VV  \\
 k^4 @> f_4\circ f_3 >>  k^4 @ > f_2 >> k^4 @ > f_1 >> k^3
\end{CD},  \]
\[\begin{CD}
 k^3 @ > g_1 >> k^3 @> g_3 >> k^3 @ > g_2 >> k^3  \\
 @V h_{\phi} VV  @V h_{\phi} VV @V h_{\phi} VV @ V h_{\phi} VV  \\
 k^4 @>  f_1 >>  k^4 @ > f_4\circ f_3 >> k^4 @ > f_2 >> k^3
\end{CD},  \]
\textbf{(\ref{examples}.3)} We now construct a monomorphism $h:
\mathcal F\rightarrow \mathcal G$, with the properties that $\phi$
 is surjective and the function $h_\phi$ is  injective.
   The PSN ${\mathcal F}=(\Gamma ,(F_i
)_3,\beta ,C )$ has the support graph $\Gamma$ with three vertices,
and the PSN ${\mathcal
 G}=(\Delta ,(G_i )_4,\alpha,D)$ has the support graph $\Delta $ with four
vertices
\[\hspace{.3in} \Gamma \ \hspace{.3in} \ \begin{array}{ccc}
&  & \bullet \ 3\\
&  & \mid \\
 1\ \bullet & {\overline{\hspace{.2in}}} &\bullet \ 2
\end{array} \hspace{.3in}\ \  \hspace{.4in} \Delta \ \ \hspace{.3in} \ \ \
\begin{array}{ccc}
 2\ \bullet &{\overline{\hspace{.2in}}} & \bullet \ 4\\
& \diagdown  & \mid \\
 1\ \bullet & {\overline{\hspace{.2in}}} & \bullet \ 3
\end{array}  \hspace{.2in}   \    \]
The morphism $h:{\mathcal F} \rightarrow {\mathcal G}$, has the
contravariant  graph morphism $\phi : \Delta\rightarrow \Gamma$,
defined by the arrows of graphs, as follows $\phi(1)=1$,
$\phi(2)=\phi(3)=2,$ and $\phi(4)=3 $, so it is a surjective map.
The family of functions $\hat{\phi}_i:{k}_{\phi (i)}\rightarrow
 k_{(i)}$,  $\hat{\phi}_1(x_1)=x_1$; $\hat{\phi}_2(x_2)=x_2;\
 \hat{\phi}_3(x_2)=x_2;\  \hat{\phi}_4(x_4)=x_4$, are injective functions.
 The sets $k_a=\Z_2$, for all vertices $a$ in $\Delta$, and $\Gamma$. The adjoint function is
 $h_\phi:{\mathbf  Z_2}^3\rightarrow {\mathbf  Z_2}^4,$  defined by
  \[  h_\phi(x_1,x_2,x_3)=(\hat{\phi}_1(x_1),\hat{\phi}_2(x_2),\hat{\phi}_3(x_2),\hat{\phi}_4(x_4))=(x_1,x_2,x_2,x_3).\]
 Then, the first condition in the definition \ref{homPSS} holds.

The  PSN  ${\mathcal F}=(\ \Gamma ;\ (F_i)_3 ;\ \beta  ;\ C
),$ is defined with the following data.\\
The set of  functions $ F_1=\{ \ f_{11}, \ f_{12})\},$
 $F_2=\{f_{21}\},$ and $
F_3=\{f_{31}\},$ where
\[f_{11}=Id,\ f_{12}(x_1,x_2,x_3)=(1,x_2,x_3), \
f_{21}=Id, \]
\[f_{31}(x_1,x_2,x_3)=(x_1,x_2,x_2\overline{x_3}).\]
A permutation  $\beta =(\ 1 \ 2 \ 3\ );$ and the probabilities
$C=\{c_{f_1}=.5168, c_{f_2}=.4832\} $. So, we are taking all the
update functions   $S =\{f_1,f_2\}$;
 \[f_1= f_{11}\circ f_{21}\circ f_{31},\
f_1=(x_1,x_2,x_3)= (x_1,x_2,x_2\overline{x_3});\]
\[\textrm{and} \ \ \ f_2= f_{12}\circ f_{21}\circ f_{31}, \
f_2(x_1,x_2,x_3)= (1,x_2,x_2\overline{x_3}).\]

On the other hand, the PSN ${\mathcal G}=(\Delta ;(G_i)_4;\alpha; D )$ has the following data.\\
The families of functions: $G_1=\{g_{11}, g_{12}\}$; $G_2=\{g_{21},
g_{22}\}$, $G_3=\{g_{31},g_{32}\}$; and $G_4=\{g_4\}$, where
\[\begin{array}{ll}
g_{11}(x_1,x_2,x_3,x_4)&= (1,x_2,x_3,x_4) \\
g_{21}(x_1,x_2,x_3,x_4)&= (x_1,1,x_3,x_4)  \\
g_{31}(x_1,x_2,x_3,x_4)&= (x_1,x_2,x_1x_2,x_4) \\
g_{41}(x_1,x_2,x_3,x_4)&=(x_1,x_2,x_3,x_2\overline{x_4})\\
\end{array}\ \begin{array} {ll}
g_{12}= Id=g_{22} &  \\
g_{32}(x_1,x_2,x_3,x_4)&= (x_1,x_2,x_2,x_4)\\
\end{array}.\]
One permutation or schedule  $\alpha=\left( \begin{array}{llll} 1&2&
3& 4\end{array}\right)$. The assigned probabilities $d_{g_5}=
.00252,\ d_{g_6}=.08321,\  d_{g_7}=.51428,\  d_{g_8}=.39999$ whose
determine the set of update functions $X=\{g_5,g_6,g_7,g_8\}$:
therefore the all update functions are  the following
\[\begin{array}{lll}
g_1=g_{11} \circ g_{21} \circ g_{31 }\circ g_{41} ,& g_2= g_{12}
\circ g_{21} \circ g_{32}\circ g_{41} &
g_3=g_{12} \circ g_{21} \circ g_{31}\circ g_{41},\\
 g_4=g_{11}\circ g_{21}\circ g_{32}\circ g_{41}& g_5=g_{12} \circ g_{22} \circ g_{31}\circ g_{41},& g_6=g_{11} \circ g_{22} \circ g_{31} \circ g_{41}\\
g_7=g_{12} \circ g_{22} \circ g_{32}\circ g_{41},& g_8=g_{11}\circ
g_{22}\circ g_{32}\circ g_{41} &\cr
\end{array}.\]
The selected functions are
\[\begin{array}{ll}
g_5(x_1,x_2,x_3,x_4)=(x_1,x_2,x_1x_2,x_2\overline{x_4}),&
g_6(x_1,x_2,x_3,x_4)=(1,x_2,x_1x_2,x_2\overline{x_4})\\
g_7(x_1,x_2,x_3,x_4)=(x_1,x_2,x_2,x_2\overline{x_4}),&
g_8(x_1,x_2,x_3,x_4)=( 1,x_2,x_2,x_2\overline{x_4})\\
\end{array}.\]
  We claim that $h: \mathcal F \rightarrow \mathcal G$ is a
  morphism, in fact the following diagrams commute.
\[\begin{CD}
{\Z_2}^3 @ > f_1 >>{\Z_2}^3 \\
 @V h_{\phi} VV  @V h_{\phi} VV  \\
{\Z_2}^4@ > g_7 >> {\Z_2}^4
\end{CD}, \ \textrm{ and  } \  \ \begin{CD}
{\Z_2}^3@> f_2 >> {\Z_2}^3 \\
 @V h_{\phi} VV  @V h_{\phi} VV \\
 {\Z_2}^4 @> g_8 >> {\Z_2}^4
\end{CD}.\]
In fact, $(h_\phi \circ f_1)(x_1, x_2,x_3)
=(x_1,x_2,x_2,x_2\overline{x_3})=(g_7\circ h_\phi)(x_1, x_2,x_3),$
 on the other hand $(h_\phi \circ
f_2)(x_1,x_2,x_3)=(1,x_2,x_2,x_2\overline{x_3})=(g_8\circ
h_\phi)(x_1,x_2,x_3)$ so, the property holds.  We verify the second
property in the definition of morphism  for the compositions $f_1$
and $g_7$, and also with the compositions $f_2$ and $g_8$. That is,
we check the sequence of local functions too.
\[\begin{CD}
  {\Z_2}^3@ > f_{31}>> {\Z_2}^3@ > f_{21}>>{\Z_2}^3 @> f_{11}>>{\Z_2}^3  \\
 @V h_{\phi}VV  @ V h_{\phi} VV @V h_{\phi} VV @V h_{\phi} VV  \\
 {\Z_2}^4 @ > g_{41} >> {\Z_2}^4 @> g_{22}\circ g_{32}>> {\Z_2}^4  @ > g_{12} >> {\Z_2}^3
\end{CD}\]
\[\begin{CD}
  {\Z_2}^3@ > f_{31}>> {\Z_2}^3@ > f_{21}>>{\Z_2}^3 @> f_{12}>>{\Z_2}^3  \\
 @V h_{\phi}VV  @ V h_{\phi} VV @V h_{\phi} VV @V h_{\phi} VV  \\
 {\Z_2}^4 @ > g_{41} >> {\Z_2}^4 @> g_{22}\circ g_{32}>> {\Z_2}^4  @ > g_{11} >> {\Z_2}^3
\end{CD}\]
$\begin{array}{cl}
 &(h_\phi \circ
f_{31})(x_1,x_2,x_3)=\ (x_1,x_2,x_2,x_2\overline{x_3}) \ =
(g_{41}\circ h_\phi)(x_1,x_2,x_3), \\
&(h_\phi \circ f_{21})(x_1,x_2,x_3)=\ (x_1,x_2,x_2,x_3) \ = ((g_{22}
\circ g_{32}) \circ
h_\phi)(x_1,x_2,x_3),\\
&(h_\phi \circ f_{11})(x_1,x_2,x_3)=\ (x_1,x_2,x_2,x_3)\ =
(g_{12}\circ h_\phi)(x_1,x_2,x_3),\\
&(h_\phi \circ f_{12})(x_1,x_2,x_3)= \ (1,x_2,x_2,x_3)= \
(g_{11}\circ
h_\phi)(x_1,x_2,x_3)\\
\end{array}$.\\
The probabilities satisfy the following conditions: $p(f_1)\geq
p(g_7)$, and $p(f_2)\geq p(g_8)$. Then our claim holds, and $h_\phi$
is a monomorphism.

 \textbf{(\ref{examples}.4)} We can construct
an epimorphism $h':\mathcal G\rightarrow \mathcal F$,  that is, the
function $\phi$ is injective and the function $h'_{\phi}$ is
surjective. We use $\phi':\Gamma\rightarrow\Delta$, defined as
follow $\phi'(i)=i+1$, for all $i\in V_\Gamma$. Therefore
$\hat{\phi'}_i:{k}_{\phi'(i)}\rightarrow k_{(i)}$,
$\hat{\phi'}_i:\Z_2\rightarrow  \Z _2$, for all $i\in V_\Gamma $,
and should be satisfies
$<h_\phi(x),i>:=\hat{\phi}_b(<x,\phi(i)>)=\hat{\phi}_b(x_{\phi(i)})$.
So, the adjoint function is
$h'_\phi(x_1,x_2,x_3,x_4)=(\hat{\phi'}_1(x_1),\hat{\phi'}_2(x_3),\hat{\phi'}_3(x_3))=(x_1,x_2,
x_4)$and  satisfies the following commutative diagrams
\[\begin{CD}
{\Z_2}^4 @ > g_5 >>{\Z_2}^4 \\
 @V h'_{\phi} VV  @V h'_{\phi} VV  \\
{\Z_2}^3@ > f_1 >> {\Z_2}^3
\end{CD}, \ \  \begin{CD}
{\Z_2}^4 @ > g_7 >>{\Z_2}^4 \\
 @V h'_{\phi} VV  @V h'_{\phi} VV  \\
{\Z_2}^3@ > f_1 >> {\Z_2}^3
\end{CD}, \ \  \begin{CD}
{\Z_2}^4@> g_6 >> {\Z_2}^4 \\
 @V h'_{\phi} VV  @V h'_{\phi} VV \\
 {\Z_2}^3 @> f_2 >> {\Z_2}^3
\end{CD} \textrm{ and  } \ \  \ \begin{CD}
{\Z_2}^4@> g_8 >> {\Z_2}^4 \\
 @V h'_{\phi} VV  @V h'_{\phi} VV \\
 {\Z_2}^3 @> f_2 >> {\Z_2}^3
\end{CD}.\]
These implies that $\mu(g_5)=\mu(g_7)=f_1$, and
$\mu(g_6)=\mu(g_8)=f_2$. In fact,
\[(h'_\phi \circ g_5)(x_1,
x_2,x_3,x_4)=(x_1,x_2,x_2\overline{x_4})=(f_1\circ
h'_\phi)(x_1,x_2,x_3,x_4),\]
\[(h'_\phi \circ g_6)(x_1,x_2,x_3,x_4)=(1,x_2,x_2\bar{x_4})=(f_2\circ
h'_\phi)(x_1,x_2,x_3,x_4), \]
\[(h'_\phi \circ g_7)(x_1,
x_2,x_3,x_4)=(x_1,x_2,x_2\overline{x_4})=(f_1\circ
h'_\phi)(x_1,x_2,x_3,x_4),\]
\[(h'_\phi \circ g_8)(x_1,
x_2,x_3,x_4)=(1,x_2,x_2\bar{x_4})=(f_2\circ
h'_\phi)(x_1,x_2,x_3,x_4).\]

Checking the compositions of local functions $g_5=g_{12} \circ
g_{22} \circ g_{31}\circ g_{41}$, and \\$f_1= f_{11}\circ
f_{21}\circ f_{31}$, we have that the following diagrams commute
\[\begin{CD}
  {\Z_2}^4@ > g_{12}>> {\Z_2}^4@ > g_{22}>>{\Z_2}^4 @> g_{31}>>{\Z_2}^4@> g_{41}>> {\Z_2}^4  \\
 @V h'_{\phi}VV  @ V h'_{\phi} VV @V h'_{\phi} VV @V h'_{\phi} VV @V h'_{\phi} VV \\
 {\Z_2}^3 @ > f_{11} >> {\Z_2}^3 @> Id>> {\Z_2}^3  @ > f_{21} >>
 {\Z_2}^3 @> f_{31} >>{\Z}^2\cr
\end{CD}.\]

By the data we only need to check the following compositions \\
$h'_\phi(g_{31}(x_1,x_2,x_3,x_4))=(x_1,
x_2,x_4) =f_{21}(h'_\phi(x_1,x_2,x_3,x_4)),$\\
$h'_\phi(g_{41}(x_1,x_2,x_3,x_4))=(x_1, x_2,x_2\bar{x_4})
=f_{31}(h'_\phi(x_1,x_2,x_3,x_4)).$ Similarly, we can prove that the
other functions hold the property. The probabilities satisfy the
following conditions: $p(g_5)\leq p(f_1)$,$p(g_7)\leq
p(f_1)$,$p(g_6)\leq p(f_2)$, and $p(g_8)\leq p(f_2)$. Therefore
$h'_\phi$ is an epimorphism.
\section{ Equivalent  Probabilistic Sequential Networks}\label{epsn}
\begin{definition}{\emph{(Equivalent PSN)}}\label{equiv} If the  morphism  $h:\mathcal D_1 \rightarrow \mathcal D_2$  satisfies
that $\phi$,  $h_\phi$  and $\mu$ are bijective functions, but the
probabilities are not necessary equals, we say that ${\mathcal
D}_1$, and ${\mathcal D}_2$ are equivalent PSN. We write ${\mathcal
D}_1 \simeq  {\mathcal D}_2$.
\end{definition}
So, ${\mathcal D}_1$, and ${\mathcal D}_2$ are equivalents if there
exist $(\phi,h_\phi,\mu)$, and $(\phi^{-1}, h_\phi^{-1},\mu ^{-1})$,
such that for all update functions $f\in {\mathcal D}_1$ and its
selected function $g\in {\mathcal D}_2$, the condition
$f=h_\phi^{-1}\circ g \circ h_\phi$ holds . It is clear that this
relation is an equivalence relation in the set of PSN.
\begin{proposition} If ${\mathcal D}_1 \simeq  {\mathcal D}_2$, then the
transition matrices $T_1$ and $T_2$ satisfy: $(T_1^m)_{(u,v)}\ne 0$,
if and only if $(T_2^m)_{(h_\phi{(u)},h_\phi(v))}\ne 0$, for all
$m\in \N$, $(u,v)\in k^n\times k^n$.
\end{proposition}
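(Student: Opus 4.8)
The plan is to show that the bijection $h_\phi$ is an isomorphism of the underlying \emph{unweighted} directed state-space graphs, and then to observe that a nonzero entry of a power of a transition matrix records exactly the existence of a directed path of the corresponding length. The hypothesis ${\mathcal D}_1\simeq{\mathcal D}_2$ supplies bijections $\phi$, $h_\phi$, $\mu$ together with the conjugation relation $f=h_\phi^{-1}\circ g\circ h_\phi$ for every update function $f\in S_1$ and its image $g=\mu(f)\in S_2$; equivalently $g\circ h_\phi=h_\phi\circ f$. Conceptually, this says that the $0/1$ support patterns of $T_1$ and $T_2$ are conjugate by the permutation matrix $P$ induced by $h_\phi$, even though the actual probabilities need not agree.

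First I would settle the case $m=1$. By definition $(T_1)_{(u,v)}\ne 0$ precisely when there is an update function $f\in S_1$ with $f(u)=v$, since every $f\in S_1$ carries positive probability $p(f)=c_f$ and the entry $p_1(u,v)=\sum_f c_f(u,v)$ is a sum of nonnegative terms. Applying the bijection $h_\phi$ and the relation $g\circ h_\phi=h_\phi\circ f$, the equality $f(u)=v$ is equivalent to $g(h_\phi(u))=h_\phi(v)$ for $g=\mu(f)$. Because $\mu$ maps $S_1$ bijectively onto $S_2$, an edge $u\to v$ in $\mathcal{S}_{{\mathcal D}_1}$ exists if and only if the edge $h_\phi(u)\to h_\phi(v)$ exists in $\mathcal{S}_{{\mathcal D}_2}$; that is, $(T_1)_{(u,v)}\ne 0$ if and only if $(T_2)_{(h_\phi(u),h_\phi(v))}\ne 0$.

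Next I would induct on $m$. Writing $(T_1^{m+1})_{(u,v)}=\sum_{w\in k^n}(T_1^m)_{(u,w)}\,(T_1)_{(w,v)}$, all summands are nonnegative, so the left side is nonzero if and only if some $w$ makes both factors nonzero. By the inductive hypothesis $(T_1^m)_{(u,w)}\ne 0$ iff $(T_2^m)_{(h_\phi(u),h_\phi(w))}\ne 0$, and by the base step $(T_1)_{(w,v)}\ne 0$ iff $(T_2)_{(h_\phi(w),h_\phi(v))}\ne 0$. Since $w\mapsto h_\phi(w)$ runs bijectively over $k^m$ as $w$ runs over $k^n$, the existence of such an intermediate $w$ is equivalent to the existence of an intermediate state $w'=h_\phi(w)$ witnessing $(T_2^{m+1})_{(h_\phi(u),h_\phi(v))}=\sum_{w'\in k^m}(T_2^m)_{(h_\phi(u),w')}\,(T_2)_{(w',h_\phi(v))}\ne 0$, which completes the induction.

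The one delicate point is the passage from ``the matrix entry is nonzero'' to ``an arrow exists,'' which relies on the fact that every selected update function in $S_i$ has strictly positive probability and that a finite sum of nonnegative reals vanishes exactly when all its terms vanish; I would isolate this observation at the $m=1$ step so that the inductive step only manipulates the zero/nonzero pattern and never the numerical values. A secondary verification is that $h_\phi^{-1}$ exists as a map $k^m\to k^n$, forcing $|k^n|=|k^m|$, so that the reindexing $w'=h_\phi(w)$ in the convolution sum is a genuine bijection of the summation indices.
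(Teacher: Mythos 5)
Your proof is correct. The paper itself offers no argument at all for this proposition (its proof reads, in full, ``It is obvious.''), and your write-up is exactly the natural filling-in of that claim: the conjugation relation $h_\phi\circ f=g\circ h_\phi$ together with bijectivity of $\mu$ and $h_\phi$ matches the supports of $T_1$ and $T_2$ edge-for-edge, and nonnegativity of the entries lets the zero/nonzero pattern propagate through matrix powers by induction; your explicit flagging of the (implicit) positivity of the probabilities of selected update functions is a point the paper glosses over entirely.
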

\begin{proof} It is obvious.
\end{proof}
\section*{Acknowledgment} This work  was supported by the Partnership M. D.
Anderson Cancer Center, University of Texas, and the Medical
Sciences Cancer Center, University of Puerto Rico, by  the Program
AABRE of Rio Piedras Campus,  and by the SCORE Program of NIH, Rio
Piedras Campus.

\end{document}